\title{Making Life More Confusing for Firefighters}
\author{Samuel D. Hand}{School of Computing Science, University of Glasgow, UK}{s.hand.1@research.gla.ac.uk}{0000-0001-8021-249X}{Supported by an EPSRC doctoral training account.}
\author{Jessica Enright}{School of Computing Science, University of Glasgow, UK}{jessica.enright@glasgow.ac.uk}{0000-0002-0266-3292}{Supported by EPSRC grant EP/T004878/1.}
\author{Kitty Meeks}{School of Computing Science, University of Glasgow, UK}{kitty.meeks@glasgow.ac.uk}{0000-0001-5299-3073}{Supported by EPSRC grant EP/T004878/1.}
\authorrunning{S.\,D. Hand, J. Enright, and K. Meeks}
\keywords{Temporal graphs, Spreading processes, Parameterised complexity}
\DeclareMathOperator{\maxtime}{maxtime}
\DeclareMathOperator{\mintime}{mintime}
\DeclareMathOperator{\dist}{dist}
\newcommand{\decisionproblem}[3]{%
\vspace{0,1cm} \noindent \fbox{%
\begin{minipage}{0.96\textwidth}%
 \begin{tabular*}{\textwidth}{@{\extracolsep{\fill}}lr} \textsc{#1} & \\ \end{tabular*}%
  \vspace{1.2mm}%
\par%
{\bf{Input:}}  #2\\%
{\bf{Output:}} #3%
\end{minipage}} \vspace{0,3cm}%
}
\begin{document}

\maketitle

\begin{abstract}
It is well known that fighting a fire is a hard task. The \textsc{Firefighter} problem asks how to optimally deploy firefighters to defend the vertices of a graph from a fire. This problem is NP-Complete on all but a few classes of graphs. Thankfully, firefighters do not have to work alone, and are often aided by the efforts of good natured civillians who slow the spread of a fire by maintaining firebreaks when they are able. We will show that this help, although well-intentioned, unfortunately makes the optimal deployment of firefighters an even harder problem. To model this scenario we introduce the \textsc{Temporal Firefighter} problem, an extension of \textsc{Firefighter} to temporal graphs. We show that \textsc{Temporal Firefighter} is also NP-Complete, and remains so on all but one of the underlying classes of graphs on which \textsc{Firefighter} is known to have polynomial time solutions. This motivates us to explore making use of the temporal structure of the graph in our search for tractability, and we conclude by presenting an FPT algorithm for \textsc{Temporal Firefighter} with respect to the temporal graph parameter vertex-interval-membership-width.
\end{abstract}

\section{Introduction}
Imagine a fire breaks out on an island within an archipelago. The fire service always have one on duty firefighter, who is quickly deployed to protect one of the other islands. The islands within the archipelago are connected by bridges, which the fire now spreads along to all unprotected islands that it neighbours. By now, the fire service have called in another firefighter, who is again deployed to protect an island, and the process repeats. The question of determining how many islands can be saved from the fire in such a scenario is formalised by the \textsc{Firefighter} problem, which models the spread of the fire over the vertices of a graph \cite{hartnell1995firefighter}.

As noted by Fomin et al. firefighting is a tough job \cite{DBLP:conf/fun/FominHL12}. Specifically \textsc{Firefighter} is NP-Complete on arbitrary graphs, although progress has been made on identifying graph classes for which it can be solved in polynomial time \cite{DBLP:journals/dm/FinbowKMR07, DBLP:journals/tcs/FominHL16}. In particular these are: interval graphs, permutation graphs, $P_k$-free graphs for $k > 5$, split graphs, cographs, and graphs of maximum degree three providing the root is of degree two. Additionally, both the parameterised complexity and the approximability of the problem have been considered \cite{DBLP:journals/algorithmica/AnshelevichCHS12, DBLP:journals/jcss/BazganCCFFL14, DBLP:conf/isaac/CaiVY08, DBLP:journals/tcs/ChlebikovaC17}. For a more general review of known results about \textsc{Firefighter}, see the work by Finbow and MacGillivray \cite{DBLP:journals/ajc/FinbowM09}.

Thankfully for the firefighters, they don't have to do all the work themselves. When they can spare the time, islanders will help to maintain firebreaks at the bridges, thus delaying the spread of the fire. Whilst this help is of course much appreciated, it unfortunately makes the problem of choosing how to optimally the deploy firefighters all the more confusing, even when it is known ahead of time when the islanders will be available to maintain the firebreaks. In this paper we explore how to simplify this decision making, both by making use of the layout of the archipelago, and the availability of the islanders. In order to do this we introduce \textsc{Temporal Firefighter}, an extension of \textsc{Firefighter} to a variety of graph in which the edges of an underlying graph are assigned times at which they are active. We refer to this variant of graph as a temporal graph. Existing algorithmic work on temporal graphs has explored how they change the notions of both paths and connectivity \cite{DBLP:journals/corr/AxiotisF16,DBLP:conf/adhoc-now/BhadraF03,DBLP:journals/ijfcs/XuanFJ03,DBLP:journals/snam/HimmelMNS17,DBLP:journals/jcss/KempeKK02,DBLP:journals/tkde/WuCKHHW16}. For a survey of algorithmic work on temporal graphs see Michai \cite{DBLP:journals/im/Michail16}, and for a more multidisciplinary overview see the work by Holme and Saram{\"{a}}ki \cite{DBLP:journals/corr/abs-1108-1780}.

We begin in \autoref{sec:prelim} by giving a formal definition \textsc{Temporal Firefighter}, before exploring how extending \textsc{Firefighter} in this way affects the spread of the fire and the complexity of the associated decision problem. We find that for every class $\mathscr{C}$ of graphs for which \textsc{Firefighter} is NP-Complete, \textsc{Temporal Firefighter} is NP-Complete on the class of temporal graphs with the graphs of $\mathscr{C}$ underlying graphs. This motivates a search for tractable cases of \textsc{Temporal Firefighter} in two directions. Firstly, in \autoref{sec:restrict}, we explore its complexity when the underlying graph class is restricted, finding that it remains NP-Complete on all but one of the underlying graph classes for which \textsc{Firefighter} is tractable. More promisingly, in \autoref{sec:fpt}, we investigate restricting the temporal structure, and give an algorithm that is FPT with respect to the temporal graph parameter vertex-interval-membership-width. 

\section{Preliminaries}
\label{sec:prelim}
Formally, the \textsc{Firefighter} problem asks how many vertices it is possible to prevent from burning on a connected, undirected, loop-free, rooted graph in the following discrete time process:

\begin{enumerate}
    \item At time $t = 0$, the root is labeled as burning.
    \item At all times $t \geq 1$, a chosen vertex is labeled as defended, and the fire then spreads to all undefended vertices adjacent to the fire.
    \item This process ends once the fire can no longer spread.
\end{enumerate}

A vertex $v$ is valid to defend on timestep $i$ if and only if $v$ is not burning or already defended on timestep $i$. We refer to a sequence of such valid defences for \textsc{Firefighter} as a strategy.

\begin{definition}[A Strategy]
\label{def:strat}
A strategy is a sequence of vertices $v_1, v_2, ..., v_\ell$, such that each $v_i$ is a valid defence on timestep $i$.
\end{definition}

We say a vertex is saved if it is not burning once the process ends. The decision problem then asks how many vertices can be saved on a given graph:

\decisionproblem{Firefighter}{A rooted graph $(G, r)$ and an integer $k$.}{Does there exist a strategy that saves at least $k$ vertices on $G$ when the fire starts at vertex $r$?}

We now extend \textsc{Firefighter} to temporal graphs, using the definition of temporal graph first introduced by Kempe et al. \cite{DBLP:journals/jcss/KempeKK02}.

\begin{definition}[A Temporal Graph]
A pair $(G, \lambda)$ where $G$ is the underlying static graph $(V, E)$ and $\lambda : E \to 2^\mathbb{N}$ is the time-labeling function, assigning to each edge a set of timesteps at which it is active.
\end{definition}

The lifetime $\Lambda$ of a temporal graph refers to the final time at which any edge is active:

\begin{definition}[Lifetime]
The lifetime $\Lambda$ of a temporal graph $(G, \lambda)$ is the maximum time on any edge. $\Lambda = \max \{ \max \lambda(e) : e \in  E(G)\}$.
\end{definition}

Temporal graphs introduce a new notion of adjacency. We say that two vertices are temporally adjacent on a given timestep if there is an edge between them active on that timestep.

\begin{definition}[Temporal Adjacency]
Two adjancent vertices $v_1$ and $v_2$  in the temporal graph $(G, \lambda)$ are temporally adjacent at time $t$ if  $t \in \lambda(v_1, v_2)$.
\end{definition}

In \textsc{Temporal Firefighter}, just as in \textsc{Firefighter}, the fire begins burning at a root vertex $r$, and on each timestep a single vertex can be defended before the fire spreads. Unlike \textsc{Firefighter} the fire does not spread to all adjacent vertices, but only to vertices to which it is temporally adjacent.

We define a strategy for \textsc{Temporal Firefighter} exactly as in \autoref{def:strat} for \textsc{Firefighter}, and the decision problem is then defined analogously to \textsc{Firefighter}:

\decisionproblem{Temporal Firefighter}{A rooted temporal graph $((G, \lambda), r)$ and an integer $k$.}{Does there exist a strategy that saves at least $k$ vertices on $(G, \lambda)$ when the fire starts at vertex $r$?}

In some ways modifying \textsc{Firefighter} to take place on a temporal graph actually makes the job of the firefighters easier. Assigning times to the edges of a static graph only serves to limit the spread of the fire. In particular, the fire in \textsc{Temporal Firefighter} can only spread along temporally admissible paths, these being a subset of the paths in the underlying static graph.

\begin{definition}[Temporally Admissible]
A temporally admissible path on the temporal graph $(G, \lambda)$ is a path on $G$ with edges $e_1,...,e_\ell$, such that there is a strictly increasing sequence of times $t_1,...,t_\ell$ with $t_i \in \lambda(e_i)$ for every $i$.
\end{definition}

Furthermore, assigning times to the edges can only slow the rate at which the fire spreads down a path; the fire is still limited to spreading at a rate of at most one vertex per timestep along that path. We refer to the earliest time at which the fire can burn fully along the length of a path from the root as the arrival time of the path.

\begin{definition}[Arrival Time]
The arrival time of a temporally admissible path containing edges $e_1,...e_\ell$ on the temporal graph $(G, \lambda)$ is the minimum $t_\ell$ such that $t_\ell$ is the end of a strictly increasing sequence of times $t_1,...,t_\ell$ with $t_i \in \lambda(e_i)$ for every $i$.
\end{definition}

As a result, if the same defences are made, the fire cannot reach anywhere in \textsc{Temporal Firefighter} on a rooted temporal graph $((G, \lambda), r)$ that it would not be able to reach in \textsc{Firefighter} on the underlying static graph $(G, r)$. This gives us the following observation.

\begin{observation}
\label{ob:validstrat}
Any strategy $S = v_1, ..., v_\ell$ for \textsc{Firefighter} on a rooted graph $(G, r)$ is also a valid strategy for \textsc{Temporal Firefighter} on any rooted temporal graph $((G, \lambda), r)$. Furthermore any vertex saved by $S$ in \textsc{Firefighter} must also be saved by $S$ in \textsc{Temporal Firefighter}.
\end{observation}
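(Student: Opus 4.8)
The plan is to prove both assertions at once by induction on the timestep $t$, comparing the two fire-spreading processes run under the \emph{same} vertex sequence $S = v_1, \dots, v_\ell$. Write $B^F_t$ for the set of vertices burning after the fire has spread at time $t$ in \textsc{Firefighter} on $(G,r)$, and $B^T_t$ for the analogous set in \textsc{Temporal Firefighter} on $((G,\lambda),r)$, in both cases using $S$. The single invariant I would maintain is: at every timestep $t$ the prefix $v_1, \dots, v_t$ has been a valid defence sequence in the temporal process, and $B^T_t \subseteq B^F_t$. Both conclusions of the observation follow from this invariant, the first because validity is part of it, and the second because a vertex is saved exactly when it lies in no $B_t$.

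For the base case $t = 0$ we have $B^F_0 = B^T_0 = \{r\}$ and no defence has yet been made, so the invariant holds trivially. For the inductive step at time $i$ I would first check that $v_i$ is a valid temporal defence. Since $S$ is valid in \textsc{Firefighter}, the vertex $v_i$ is neither burning nor already defended there, i.e.\ $v_i \notin B^F_{i-1} \cup \{v_1, \dots, v_{i-1}\}$; the inductive inclusion $B^T_{i-1} \subseteq B^F_{i-1}$ then gives $v_i \notin B^T_{i-1} \cup \{v_1, \dots, v_{i-1}\}$, so $v_i$ is valid in the temporal process as well. It matters here that validity up to step $i-1$ forces the defended sets of the two processes to be identical, namely $\{v_1, \dots, v_{i-1}\}$.

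Next I would push the inclusion through the fire-spreading step. Any vertex newly burning in the temporal process at time $i$ is undefended and temporally adjacent at time $i$ to some $u \in B^T_{i-1}$. By the definition of temporal adjacency this vertex is adjacent to $u$ in $G$, and $u \in B^T_{i-1} \subseteq B^F_{i-1}$; since the defended sets agree, that vertex is therefore an undefended neighbour of a burning vertex in \textsc{Firefighter} and so belongs to $B^F_i$. Combined with $B^T_{i-1} \subseteq B^F_{i-1} \subseteq B^F_i$, this yields $B^T_i \subseteq B^F_i$, closing the induction; the same spreading argument covers all timesteps beyond $\ell$, where no further defence is made. Taking the union over all $t$, a vertex saved by $S$ in \textsc{Firefighter} lies in no $B^F_t$, hence in no $B^T_t$, and is thus saved by $S$ in \textsc{Temporal Firefighter}.

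I expect the only delicate point to be the interplay between validity and the burning sets: because validity of a defence depends on which vertices are burning, I must be careful to carry the claim ``the defended sets remain equal'' inside the induction rather than assuming it, so that the adjacency comparison in the spreading step is legitimate. Once that is folded into the invariant, the remainder reduces to the already-established facts that temporal adjacency is a restriction of static adjacency and that the fire can never un-burn a vertex.
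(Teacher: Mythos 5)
Your proof is correct. Note, though, that the paper never gives a formal proof of this observation: it is presented as an immediate consequence of the preceding discussion, which reasons globally via paths --- the fire in \textsc{Temporal Firefighter} can only travel along temporally admissible paths, which form a subset of the paths of the underlying static graph, and the arrival time of any such path is at least its length, so with the same defences in place the temporal fire can never reach a vertex the static fire could not. Your argument instead proceeds locally, by induction on the timestep, carrying the invariant that the temporal burnt set is contained in the static burnt set together with the validity of the shared defence sequence. The two routes buy different things. The paper's path/arrival-time viewpoint is the one it reuses later (in \autoref{ob:addnonedgessave} and \autoref{thm:hardcliquebounded}, arrival-time comparisons do essentially all the work), so stating the observation in that language sets up those proofs. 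Your induction is more elementary and self-contained, and it makes explicit a point the paper glosses over: validity of a defence depends on the current burnt set, so the equality of the two defended sets and the inclusion $B^T_{i-1} \subseteq B^F_{i-1}$ must be established simultaneously inside the induction rather than assumed separately --- exactly the subtlety you flag in your last paragraph. Your treatment of timesteps after the final defence is also sound: once the static process has ended, every undefended static neighbour of a statically burnt vertex is itself burnt, so the inclusion persists even while the temporal fire continues to spread along late-active edges.
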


However, the decision problem remains just as hard, as we can assign times in a rooted temporal graph $((G, \lambda), r)$ such that \textsc{Temporal Firefighter} simulates \textsc{Firefighter} for any rooted graph $(G, r)$. This is achieved by setting $\lambda(e) = \{1, ..., |V(G)|-1\}$ for every edge $e$. By time $|V(G)|-1$ every vertex would have been defended, so the process must be over. Thus, for the entirety of the time during which the fire can spread, every edge is active, just as in \textsc{Firefighter}. In this respect we can view \textsc{Firefighter} to be a special case of \textsc{Temporal Firefighter}.

Note that \textsc{Temporal Firefighter} is in NP, as a strategy acts as a certificate that can be checked in polynomial time by simulating \textsc{Temporal Firefighter}. We then have the following observation, as the above method for simulating \textsc{Firefighter} preserves the underlying graph class.

\begin{observation}
For every class $\mathscr{C}$ of graphs for which \textsc{Firefighter} is NP-Complete, \textsc{Temporal Firefighter} is NP-Complete on the class of temporal graphs with the graphs of $\mathscr{C}$ as the underlying graphs.
\end{observation}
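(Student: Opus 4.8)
The plan is to establish both membership in NP and NP-hardness on the class of temporal graphs whose underlying graphs lie in $\mathscr{C}$. Membership in NP has already been noted in the preceding discussion: a strategy $S = v_1, \ldots, v_\ell$ serves as a certificate, and simulating \textsc{Temporal Firefighter} to count the saved vertices takes polynomial time. It therefore remains to prove NP-hardness.

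For hardness I would give a polynomial-time many-one reduction from \textsc{Firefighter} restricted to $\mathscr{C}$, which is NP-complete by hypothesis. Given an instance $((G, r), k)$ with $G \in \mathscr{C}$, the reduction constructs the temporal graph $(G, \lambda)$ by setting $\lambda(e) = \{1, \ldots, |V(G)| - 1\}$ for every $e \in E(G)$, and outputs the \textsc{Temporal Firefighter} instance $(((G, \lambda), r), k)$. This is exactly the simulation described above; since the underlying graph is unchanged it remains a member of $\mathscr{C}$, and the construction is clearly computable in polynomial time, so the reduction preserves the required underlying graph class.

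The heart of the argument is to show the reduction preserves yes-instances: a strategy saves at least $k$ vertices in \textsc{Firefighter} on $(G, r)$ if and only if the same strategy saves at least $k$ vertices in \textsc{Temporal Firefighter} on $((G, \lambda), r)$. The key step is to verify that the two fire-spreading processes coincide. Since the fire advances by at most one vertex per timestep and $G$ has $|V(G)|$ vertices, in both processes every vertex that will ever burn has done so by timestep $|V(G)| - 1$, so the process always terminates by this time. For every timestep $t \in \{1, \ldots, |V(G)| - 1\}$ every edge is temporally active, so at each such step the temporal fire spreads to exactly the same undefended neighbours as the static fire. A straightforward induction on $t$ then shows that the burning and defended vertex sets agree at every timestep, whence the same vertices are saved.

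Combining these pieces gives the result: the reduction is polynomial-time, maps into the required class of temporal graphs, and preserves the answer, so \textsc{Temporal Firefighter} is NP-hard on this class, and together with membership in NP this yields NP-completeness. I expect the only point needing care to be the inductive verification that the two processes agree, and in particular the bound guaranteeing that the process finishes before any edge becomes inactive; but this follows directly from the one-vertex-per-timestep spreading rate already established, so I anticipate no serious obstacle.
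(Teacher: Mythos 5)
Your proposal is correct and follows essentially the same route as the paper: NP membership via the strategy-as-certificate argument, and hardness via the reduction that sets $\lambda(e) = \{1, \ldots, |V(G)|-1\}$ on every edge so that all edges are active throughout the (necessarily terminated by then) spreading process, which preserves the underlying graph class. The inductive verification that the two processes coincide is exactly the content the paper compresses into its remark that ``for the entirety of the time during which the fire can spread, every edge is active, just as in \textsc{Firefighter}.''
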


\section{Restricting the Underlying Graph}
\label{sec:restrict}
As we have seen that \textsc{Temporal Firefighter} is NP-Complete on any class of temporal graphs $\{(G, \lambda) : G \in \mathscr{C}\}$ where $\mathscr{C}$ is a class of graphs for which \textsc{Firefighter} is NP-Complete, we now determine its complexity on underlying graph classes for which \textsc{Firefighter} is known to be solvable in polynomial time. These are: interval graphs, permutation graphs, $P_k$-free graphs for $k > 5$, split graphs, cographs, and graphs of maximum degree three providing the root is of degree two\cite{DBLP:journals/dm/FinbowKMR07,DBLP:journals/tcs/FominHL16}. We prove that \textsc{Temporal Firefighter} is NP-Complete for all of these classes except the last, for which we find there is a polytime solution. Additionally we establish that it is NP-Complete for AT-free graphs, a class for which the complexity of \textsc{Firefighter} has not been determined. This lack of tractable cases of \textsc{Temporal Firefighter} when restricting the underlying graph motivates restricting the temporal structure, which we explore in \autoref{sec:fpt}, finding this to be a fruitful route to tractability.

\begin{figure}
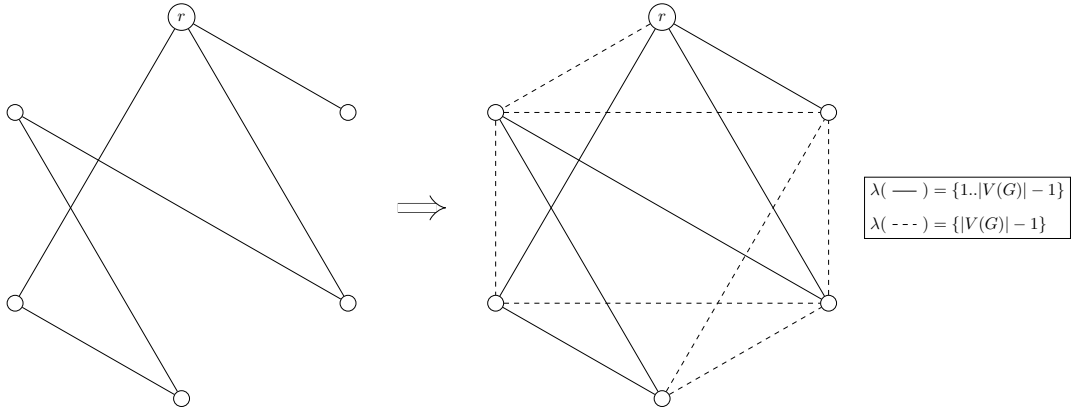

     \centering
     \includestandalone[width=\textwidth]{figures/cliquereduction}
     \caption{An example of the reduction for \textsc{Temporal Firefighter} on cliques.}
     \label{fig:cliquereduction}
\end{figure}

All of these hardness results follow from the fact that \textsc{Temporal Firefighter} is hard when the underlying graph is a clique. This can be shown by reduction from \textsc{Firefighter} by assigning times to the edges in a static graph $G$ so that they will be active at all times up until $|V(G)|-1$, at which point the fire can certainly no longer spread. We then add further edges to make the graph a clique, and have them only active from time $|V(G)|-1$ onwards such that they will not affect the spread of the fire. \textsc{Temporal Firefighter} on such a clique will then simulate \textsc{Firefighter} on $G$. A sketch of this construction can be seen in \autoref{fig:cliquereduction}.

We in fact prove the stronger result that \textsc{Temporal Firefighter} is hard on cliques of $n$ vertices with lifetime of less than $n^\frac{1}{c}$ for any positive integer constant $c$. This reduction operates by adding $n^c-n$ vertices to a static graph, and assigning times in such a way that they will all burn immediately, without affecting the spread of the fire over the existing graph. All defences then take place on a clique constructed in the same manner as that described above.

We first show that we can add edges to a temporal graph in such a way that they do not affect which vertices can be saved.

\begin{lemma}
\label{ob:addnonedgessave}
Suppose there is a strategy $S = v_1,...,v_\ell$ for \textsc{Temporal Firefighter} on the rooted temporal graph $(((V, E), \lambda), r)$ that saves $k$ vertices. Let $F$ be any set of additional edges not in $E$, and $\lambda' : E \cup F \to \mathbb{N}$ be a labelling function with $\lambda'{\big|}_E = \lambda$ and $\min(\lambda'(f)) \geq |V|-1$ for all $f \in F$. Let $S'$ be the strategy consisting of all the defences in $S$ followed by defending every remaining undefended vertex in an arbitrary order. $S'$ will then save $k$ vertices in \textsc{Temporal Firefighter} on $(((V, E \cup F), \lambda'), r)$.
\end{lemma}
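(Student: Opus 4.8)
The plan is to show that the additional edges in $F$ become active too late to affect the spread of the fire, and therefore the original defences $v_1,\dots,v_\ell$ behave identically on the augmented temporal graph. The key observation is that the \textsc{Temporal Firefighter} process on any graph with $|V|$ vertices must terminate by timestep $|V|-1$: on every timestep at least one vertex is either defended or newly burning (or the fire stops spreading), so after $|V|-1$ timesteps every vertex has been accounted for and the fire can no longer spread. Since every edge $f \in F$ satisfies $\min(\lambda'(f)) \geq |V|-1$, these edges are simply never used by the fire during the window in which it is capable of spreading.

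First I would make precise the claim that the process halts by time $|V|-1$, arguing that the set of burning vertices is non-decreasing and that if it fails to grow on some timestep then the fire has stopped and the process ends. Combined with the fact that each timestep burns at most as it would in the static setting, this bounds the number of timesteps during which new vertices can ignite. Next I would argue by induction on the timestep $t \in \{1,\dots,|V|-1\}$ that the set of burning vertices after timestep $t$ is exactly the same in \textsc{Temporal Firefighter} on $(((V,E),\lambda),r)$ under $S$ as it is on $(((V, E \cup F),\lambda'),r)$ under $S'$. The inductive step uses temporal adjacency: any edge of $F$ that might carry the fire to a new vertex is inactive at all times strictly less than $|V|-1$, so it cannot contribute to spreading before the process has already ended, and the restriction $\lambda'\big|_E = \lambda$ guarantees the original edges behave identically. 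Because $S'$ agrees with $S$ on its first $\ell$ defences and these are the only defences made while the fire is still spreading, the burning sets coincide.

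It then follows that exactly the same $k$ vertices are saved: a vertex burns under $S'$ on the augmented graph if and only if it burns under $S$ on the original graph. The trailing defences appended to form $S'$ occur only after the fire has stopped, so they are trivially valid (they defend vertices that are neither burning nor already defended) and they do not change the burning set; their only role is to make $S'$ a syntactically complete strategy. Hence $S'$ saves precisely the $k$ vertices saved by $S$.

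The main obstacle I anticipate is being careful with the termination bound and its interaction with the definition of a valid strategy. In particular one must confirm that appending the extra defences yields a genuine \emph{valid} strategy in the sense of \autoref{def:strat} — that each appended vertex is indeed undefended and unburnt at the moment it is defended — and that the hypothesis $\min(\lambda'(f)) \geq |V|-1$ is exactly strong enough to guarantee the edges of $F$ are dormant throughout the fire-spreading phase. Once the $|V|-1$ termination bound is established cleanly, the rest of the argument is a routine induction, so I would spend most of the effort pinning down that bound and verifying it lines up with the stated labelling condition.
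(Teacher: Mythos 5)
There is a genuine gap, and it sits exactly where you predicted you would need to be careful: the termination bound. Your claim that ``the \textsc{Temporal Firefighter} process on any graph with $|V|$ vertices must terminate by timestep $|V|-1$'' is justified by the argument that if the burning set fails to grow on some timestep then the fire has stopped and the process ends. That reasoning is valid for static \textsc{Firefighter}, but it is precisely what fails on temporal graphs: the fire can fail to spread at time $t$ simply because no frontier edge is active at time $t$, and then resume spreading later when edges become active. Nothing in the lemma bounds the labels of the \emph{original} edges $E$, so under $S$ on $(((V,E),\lambda),r)$ the fire can keep burning new vertices long after timestep $|V|-1$ (once the $\ell$ defences of $S$ are exhausted, no defences force an accounting of vertices either). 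The bound you want is strategy-specific: it holds for $S'$ on the augmented graph because $S'$ defends one new vertex on every timestep until every vertex is burnt or defended, so all $|V|$ vertices are accounted for by the defence phase of timestep $|V|-1$, and hence no edge of $F$ ever carries fire to a new vertex. This is the counting argument the paper uses; it cannot be applied to the original process under $S$.

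This error propagates into your inductive invariant. You claim the burning sets are \emph{exactly} equal at every timestep and conclude that a vertex burns under $S'$ on the augmented graph if and only if it burns under $S$ on the original graph, so that $S'$ saves ``precisely'' $k$ vertices. That biconditional is false. Take the path $r$--$a$--$b$ with $\lambda(ra)=\{1\}$, $\lambda(ab)=\{5\}$, and $S$ the empty strategy (so $k=0$); add $F=\{rb\}$ with $\lambda'(rb)=\{2\}\geq |V|-1$. Under $S$ on the original graph, $b$ burns at time $5$; under $S'$ on the augmented graph, $b$ is defended at time $1$ and never burns. Only one direction of your claimed equivalence is true, namely that every vertex burnt under $S'$ on the augmented graph is burnt under $S$ on the original graph, and that containment is all the lemma needs (it yields ``saves at least $k$'', which is how the statement must be read). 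The paper's proof is exactly this one-directional induction: for $t\leq\ell$ it shows any vertex unburnt under $S$ on the original graph stays unburnt on the augmented graph, using the fact that any path through an edge of $F$ has arrival time at least $|V|-1>t$, and it then handles the appended defences by the counting argument above. Your temporal-adjacency inductive step can be repaired to prove that same containment, but as written both the termination lemma you rely on and the exact-equality invariant are false statements about temporal graphs.
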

\begin{proof}
For each timestep $t \leq \ell$ consider any vertex $v$ that does not burn by the end of timestep $t$ when the first $t$ defences from $S$ are played on $(((V, E), \lambda), r)$. We will show by induction on $t$ that this vertex does not burn when the first $t$ defences from $S$ are played on $(((V, E \cup F), \lambda'), r)$. See that in particular this allows us to inductively assume that the defences are valid on $(((V, E \cup F), \lambda'), r)$, as in particular the inductive hypothesis will imply that for any $t' \leq t$, a defence $v_t'$ will not burn before the end of timestep $t' - 1$ on $(((V, E \cup F), \lambda'), r)$.

If $t = 0$ then the only vertex to have burnt in both graphs is $r$. Otherwise, consider all the paths from $r$ to $v$ in $(((V, E), \lambda), r)$. As $v$ does not burn, each of these paths either contains a defended vertex or has an arrival time greater than $t$. Now consider the paths from $r$ to $v$ in $(((V, E \cup F), \lambda'), r)$. For each of these paths, either it is one of the aforementioned paths from $(((V, E), \lambda), r)$, or contains an edge from $F$ and thus has an arrival time of at least $|V|-1 > t$. In either case, the fire cannot have burnt along the path to $v$, and thus $v$ does not burn.

Finally note that there is time to make the extra defences in $S'$ before the fire spreads down the additional edges in $F$, as all vertices in the graph must be defended by the time these edges are active.
\end{proof}

We are now ready to give the reduction. This result allows us to determine that \textsc{Temporal Firefighter} is NP-Complete on the class of temporal graphs $\{((G, \lambda), r) : (G, r) \in \mathscr{C}\}$ for all but one of the classes $\mathscr{C}$ for which it is known that \textsc{Firefighter} has a polynomial time solution.

\begin{figure}
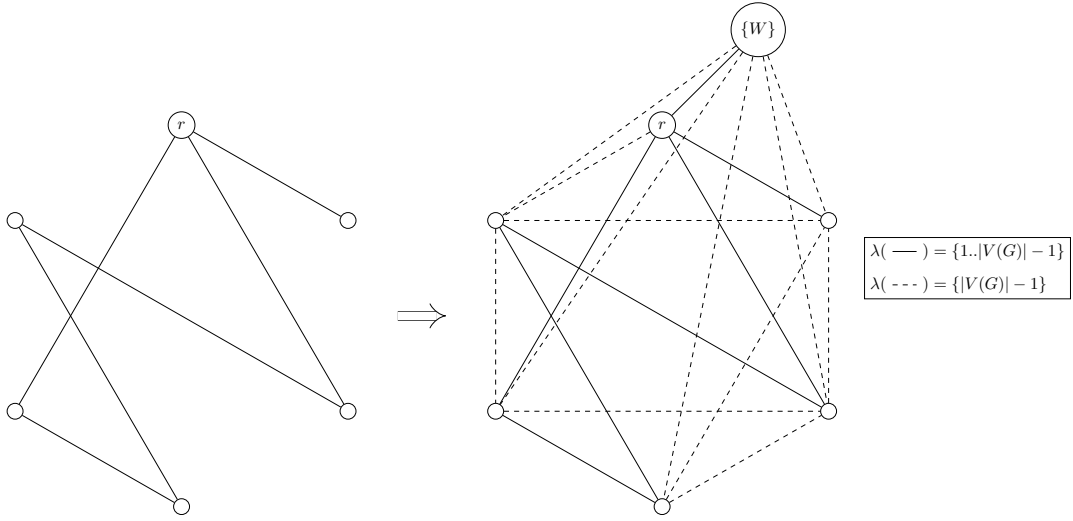

     \centering
     \includestandalone[width=\textwidth]{figures/boundedcliquereduction}
     \caption{An example of the reduction for \textsc{Temporal Firefighter} on cliques with bounded lifetime. The vertex marked $\{W\}$ represents the set $W$ containing $|V(G)|^c-|V(G)|$ vertices.}
     \label{fig:boundedcliquereduction}
\end{figure}

\begin{theorem}
\label{thm:hardcliquebounded}
For any constant $c \in \mathbb{N}$, \textsc{Temporal Firefighter} is NP-Complete when restricted to temporal graphs whose underlying graph is a clique and whose lifetime is at most $n^\frac{1}{c}$ where $n$ is the number of vertices in the graph.
\end{theorem}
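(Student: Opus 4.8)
The membership of \textsc{Temporal Firefighter} in NP has already been noted, so the task is to establish NP-hardness for each fixed $c$. The plan is to reduce from \textsc{Firefighter} on an arbitrary rooted graph $(G,r)$ with $m := |V(G)|$, producing a rooted temporal graph whose underlying graph is a clique on $n := m^c$ vertices and whose lifetime is only $m-1 \le m = n^{1/c}$. I take the vertex set to be $V(G) \cup W$, where $W$ is a set of $m^c - m$ padding vertices whose sole purpose is to inflate the vertex count so that the modest lifetime $m-1$ falls below the required bound $n^{1/c}$. On $V(G)$ I would use exactly the clique labelling already sketched for the unbounded case: the original edges of $G$ receive the label $\{1,\dots,m-1\}$ and the missing edges inside $V(G)$ are activated at time $m-1$, so that, as far as the vertices of $V(G)$ are concerned, the process is identical to \textsc{Firefighter} on $(G,r)$.

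The heart of the construction is arranging for $W$ to be inert. I would give \emph{every} edge incident to a vertex of $W$ -- the edges from $r$, the edges to the remaining vertices of $V(G)$, and the edges inside $W$ -- the single active time $\{1\}$. At time $1$ the only burning vertex is $r$, so after the firefighter's move the fire spreads from $r$ simultaneously to its $G$-neighbours and, along the $r$--$W$ edges, to the whole of $W$. Because a vertex that ignites in a given step does not itself spread fire during that step, and because every $W$-incident edge is dead from time $2$ onwards, no fire ever passes from $W$ back into $V(G)$: the padding burns at time $1$ and then plays no further role. This is precisely what keeps the lifetime at $m-1$ rather than the $n-1$ that a direct appeal to \autoref{ob:addnonedgessave} on the full clique would force, and the whole construction is plainly polynomial for fixed $c$.

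It then remains to match the optima. In the forward direction, \autoref{ob:validstrat} together with the ``defend everything that is left'' completion used in \autoref{ob:addnonedgessave} turns a \textsc{Firefighter} strategy on $G$ saving $k$ vertices into a \textsc{Temporal Firefighter} strategy on the clique saving the same $k$ vertices of $V(G)$ while $W$ burns. In the backward direction I would read a clique strategy as a strategy on $G$: since every vertex of $V(G)$ is joined to the always-burning root by an edge that is active at time $m-1$, any $G$-vertex that survives on the clique must actually have been defended rather than merely fenced off, so the defences restricted to $V(G)$ form a valid \textsc{Firefighter} strategy saving at least as many $G$-vertices.

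The step I expect to be the main obstacle is the exact accounting of the padding. Because all of $W$ ignites at time $1$ and the firefighter defends before the fire spreads, exactly one padding vertex can always be rescued, by spending the time-$1$ defence on it; and this is profitable precisely when an optimal \textsc{Firefighter} strategy on $G$ can afford to forgo its opening move. Left unchecked this produces an off-by-one between the clique optimum and the \textsc{Firefighter} optimum, so no single choice of target $k'$ would be correct. I would remove it by padding the source instance before forming the clique -- for example attaching at $r$ a small gadget whose defence at time $1$ is forced in every optimal play, so that forfeiting the opening move always costs at least one saved vertex -- after which rescuing a padding vertex is never strictly better than optimal $G$-play and I may take $k' = k$. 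Verifying that such a gadget leaves the underlying graph a clique and the lifetime bound intact, while not perturbing the correspondence of the preceding paragraph, is the delicate part of the argument.
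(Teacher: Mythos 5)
Your construction is essentially the paper's: the same padding set $W$ of $m^c-m$ vertices, the same labelling inside $V(G)$, and only a cosmetic difference in how the $W$-incident edges are timed (you use $\{1\}$ everywhere; the paper puts $\{1,\dots,\ell-2\}$ on the $r$--$W$ edges and $\{\ell-1\}$ on the rest). Either choice makes $W$ burn on the first timestep and stay inert, and your forward and backward readings of strategies match the paper's. The genuine gap is in your final paragraph: the ``off-by-one'' you identify does not exist, and the gadget you invoke to repair it is never constructed --- you yourself flag its verification as the delicate part and leave it undone, so the reduction is incomplete at exactly the step that makes it exact.

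Here is why the obstacle is a phantom. Suppose a clique strategy defends some $w\in W$; this can only happen at time $1$, since every undefended vertex of $W$ burns at time $1$. Since $G$ is connected, $r$ has a neighbour $v$ in $G$, and $v$ burns at time $1$ under this strategy. Modify the strategy by defending $v$ instead of $w$, keeping all later defences. Extra or earlier defences never cause extra burning, so every vertex of $V(G)$ that was saved is still saved, $v$ is saved in addition, and only $w$ is lost: the modified strategy saves at least as many vertices. Equivalently, in \textsc{Firefighter} on $G$ the optimum always exceeds the ``skip the opening move'' optimum by at least one, because one may prepend to any skip-strategy a defence of a neighbour of $r$ (such a neighbour necessarily burns under that strategy) and gain at least that vertex; so trading the opening move on $G$ for one rescued padding vertex can never strictly beat honest play, the clique optimum equals the $G$ optimum, and $k'=k$ works with no source-padding gadget at all. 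This short exchange argument is precisely how the paper closes the backward direction (it is the step ``defending $v$ instead of a vertex in $W$ on the first timestep saves at least as many vertices'' in the proof of \autoref{thm:hardcliquebounded}); with it, your proof goes through, but as written --- resting on an unspecified, unverified gadget motivated by a false premise --- it is not complete.
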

\begin{proof}
We give a reduction from \textsc{Firefighter}; given an instance $((G, r), k)$ of \textsc{Firefighter} and a constant $c$ we construct an instance $(((G', \lambda), r), k)$ of \textsc{Temporal Firefighter} which is a yes-instance if and only if $(G, r), k)$ is a yes-instance of \textsc{Firefighter}.

Letting $\ell = |V(G)|$, we now construct an instance $(((G', \lambda), r), k)$ of \textsc{Temporal Firefighter} with $\ell^c$ vertices as follows. Let $W$ be a set of $\ell^c-\ell$ vertices not in $G$. Then let $G'$ be the graph $(V', E')$, with $V' = V(G) \cup W$, and $E'$ containing edges connecting every pair of distinct vertices in $V'$, making the graph a clique, as shown in \autoref{fig:boundedcliquereduction}. We then define $\lambda$ as follows:

\[
\lambda(e) =
    \begin{cases}
        \{1,2,...,\ell-2\} & \text{if } e \in E(G) \\
        & \text{or } e = rv \text{ and } v \in W \\
        \{\ell-1\} & \text{otherwise.}
    \end{cases}
\]

Note that the lifetime of this instance is $\ell-1$, which is less than $|V(G')|^\frac{1}{c} = \ell$, as required.

We now show that if $((G, r), k)$ is a yes-instance of \textsc{Firefighter} then $(((G', \lambda), r), k)$ is a yes-instance of \textsc{Temporal Firefighter}. If $((G, r), k)$ is a yes-instance then there is a strategy $S$ for \textsc{Firefighter} on $(G, r)$ that saves at least $k$ vertices. We claim that we can save at least $k$ vertices in \textsc{Temporal Firefighter} on $((G', \lambda), r)$ by first playing the defences from $S$, and then defending in arbitrary order the remaining unburnt vertices in $V(G') \setminus W$.

If we play the defences from $S$, then every vertex in $W$ burns on the first timestep, and we are left with only the vertices in $V(G)$ to defend. Any path from a vertex in $W$ to a vertex in $V(G)$ other than those that go via the root has an arrival time of at least $\ell-1$, and we have at most $\ell-2$ vertices left to defend after $W$ burns, so the process must have ended before the fire spreads from $W$ into $V(G)$. We can then consider only the spread of the fire over the subgraph of $((G', \lambda), r)$ induced by $V(G)$, and this is the temporal graph $((V(G), E(G) \cup F), \lambda')$ where $F$ contains edges connecting every pair of vertices in $V(G)$ not connected by edges in $E(G)$, and $\lambda'$ is defined as follows. $$\lambda'(e) = \begin{cases}\{1,2,...,\ell-2\} & \text{if } e \in E(G) \\\{\ell-1\} & \text{otherwise}\end{cases}$$ We know from \autoref{ob:validstrat} that the strategy $S$ will save at least $k$ vertices in any temporal graph with $G$ as the underlying graph, and then from \autoref{ob:addnonedgessave} we know that it is possible to save at least $k$ vertices on $((V(G), E(G) \cup F), \lambda')$, and thus $(((G', \lambda), r), k)$ is a yes-instance.

To show the converse, we first argue that if it is possible to save $k$ vertices in \textsc{Temporal Firefighter} on $((G', \lambda), r)$ then in particular it is possible to do this without defending any vertices in $W$. It is only possible to defend a vertex in $W$ on the first timestep, as every undefended vertex in $W$ will burn on timestep 1. As $G$ is connected, there must be at least one vertex $v$ in $V(G)$ on $((G', \lambda), r)$ that is connected to $r$ by an edge active at times $\{1,2,...,\ell-2\}$, and so $v$ will burn on the first timestep if a vertex in $W$ is defended. Thus, defending $v$ instead of a vertex in $W$ on the first timestep saves at least as many vertices.

Next we observe that in any strategy that does not defend a vertex in $W$, the fire stops spreading by timestep $\ell-1$, as every vertex in $W$ burns instantly on timestep 1, and by timestep $\ell-1$ it must be the case that every vertex in $V(G)$ is burnt or defended.

It follows that if $(((G', \lambda), r), k)$ is a yes-instance then there is a strategy for \textsc{Temporal Firefighter} on $((G', \lambda), r)$ that saves at least $k$ vertices, and does not defend any vertices in $W$.

We now see that this same strategy is valid for \textsc{Firefighter} on $(G, r)$. Firstly, it does not defend any vertices in $W$. Secondly, if we consider any defence $v_i$ in the strategy, then we can see that the paths from the root $r$ to $v_i$ in $(G, r)$ are all also present in $((G', \lambda), r)$ and have arrival times equal to their length. If $v_i$ does not burn in \textsc{Temporal Firefighter} on $((G', \lambda), r)$ then every such path is either defended, or has length  greater than $i$. Thus, if we inductively assume the first $i-1$ defences from $S$ are valid on $(G, r)$, we can see that when these defences are played $v_i$ cannot burn in \textsc{Firefighter} on $(G, r)$, as every undefended path from $r$ to $v_i$ must have length greater than $i$.

Furthermore if a vertex $v$ does not burn in \textsc{Temporal Firefighter} on $((G', \lambda), r)$ then it must not burn in \textsc{Firefighter} on $(G, r)$, as the temporally admissible paths between $r$ and $v$ in $(G', \lambda)$ are a superset of the paths between $r$ and $v$ in $G$. If $v$ does not burn in \textsc{Temporal Firefighter} then each of these paths either contains a defended vertex, or has an arrival time of $\ell-1$, and is not present in $G$. Therefore it is possible to save at least $k$ vertices on $(G, r)$, and $((G, r), k)$ is also a yes-instance.
\end{proof}

As a result we can deduce that \textsc{Temporal Firefighter} is NP-Complete on several clique containing classes for which \textsc{Firefighter} is in P. For the same reason, we can determine that \textsc{Temporal Firefighter} is NP-Complete on AT-free graphs, a class for which the complexity of \textsc{Firefighter} is still an open problem.

\begin{corollary}
\textsc{Temporal Firefighter} is NP-Complete on split graphs, unit interval graphs, cographs, $P_k$-free graphs for $k > 2$, and AT-free graphs.
\end{corollary}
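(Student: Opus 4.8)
The plan is to derive every hardness result as an immediate consequence of Theorem~\ref{thm:hardcliquebounded}, by observing that a clique lies in each of the listed graph classes. Since that theorem reduces \textsc{Firefighter} to instances of \textsc{Temporal Firefighter} whose underlying graph is a clique, the very same reduction simultaneously witnesses NP-hardness for every class $\mathscr{C}$ that contains all cliques. Membership in NP has already been established, as a strategy is a polynomial-time checkable certificate, so it remains only to verify that $K_n$ belongs to each of the five classes.

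First I would dispatch the three classes for which membership is most transparent. A complete graph $K_n$ is a split graph, taking the whole vertex set as the clique side of the partition and the empty set as the independent side. It is a unit interval graph, realised by $n$ unit intervals sharing a common point. And it is AT-free, since an asteroidal triple consists of three pairwise non-adjacent vertices, which cannot occur when every pair of vertices is adjacent.

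Next I would treat the two path-related classes together. A clique contains no induced path on three vertices, since the two endpoints of such a path would be non-adjacent, contradicting completeness; hence $K_n$ is $P_3$-free. As any induced $P_k$ with $k \geq 3$ contains an induced $P_3$ on its first three vertices, it follows that $K_n$ is $P_k$-free for every $k > 2$, and in particular $P_4$-free, so it is also a cograph.

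Having placed cliques in each class, the conclusion follows at once: for each class the instances produced by Theorem~\ref{thm:hardcliquebounded} form a subfamily of the inputs, so the restricted problem is both in NP and NP-hard, hence NP-complete. I do not expect a genuine obstacle here, since all the real work was done in the reduction of Theorem~\ref{thm:hardcliquebounded}; the only point requiring any care is confirming that the clique membership holds \emph{uniformly} across all $k > 2$ for the $P_k$-free case, which the induced-$P_3$ argument above settles in one step.
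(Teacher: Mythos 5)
Your proposal is correct and takes essentially the same route as the paper: the paper derives this corollary directly from Theorem~\ref{thm:hardcliquebounded} by observing that cliques belong to each of the listed classes (the paper calls them ``clique containing classes''), with NP membership having been established earlier. Your explicit verifications of clique membership in each class are all sound and simply spell out details the paper leaves implicit.
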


We have seen, \textsc{Temporal Firefighter} is hard on several graph classes for which \textsc{Firefighter} is easy. However there is one non-trivial class for which both \textsc{Firefighter} and \textsc{Temporal Firefighter} are easy, that being the class of graphs of maximum degree three, with a root of degree at most two.

A proof that \textsc{Firefighter} is easy on this class is given by Finbow et al. \cite{DBLP:journals/dm/FinbowKMR07}. This proof works due to the fact that it is always optimal to restrict the fire to spreading down only one path on such a graph. An algorithm need only find the shortest path at which the fire can be contained at the end, and then defend accordingly. Exactly the same can be done for for \textsc{Temporal Firefighter}, the only difference being in calculating where the fire can be contained -- sometimes the active times of the edges allow the fire to be contained at a vertex in \textsc{Temporal Firefighter} where it could not be contained in \textsc{Firefighter}.

We present a strategy $S$ for \textsc{Temporal Firefighter} on a temporal graph $((G, \lambda), r)$ of maximum degree three where $r$ is of degree two, and show that this strategy is optimal and computable in polynomial time. This strategy and proof only requires slight modifications from that given by Finbow et al. for \textsc{Firefighter} \cite{DBLP:journals/dm/FinbowKMR07}.

Throughout, for any two vertices $v$ and $u$ in a temporal graph $(G, \lambda)$ let $\dist(v, u)$ be the number of edges on the shortest path between $v$ and $u$ in the underlying graph $G$.

After defining the strategy $S$ we show that no strategy that does not always defend next to the fire can outperform $S$, and thus that there always exists an optimal strategy which only defends next to the fire. Such a strategy, due to the degree restriction, limits the fire to spreading along a single path. An optimal strategy then finds the shortest of such paths to a vertex at which the spread of the fire can be stopped. We observe that this can be done at any vertex $u$ where there are one or less incident edges not on the path and active on timestep $\dist(r, u) + 1$. Stated otherwise, as soon as the fire reaches a vertex at which the temporal nature of the graph delays its spread, it is possible to contain it, and in an optimal strategy the fire will spread along the path to such a vertex at a rate of one vertex per timestep, just as in \textsc{Firefighter}. We then show that strategy $S$ is exactly this strategy, and then that the number of vertices saved by such an optimal strategy can be computed in polynomial time, as required.

We begin by defining three sets that will be used in the strategy: $V_0$, $V_1$, and $V_c$.

$V_0$ and $V_1$ are the sets of all vertices $u$ that at time $\dist(r, u)+1$ are temporally adjacent, respectively, to $0$ and $1$ vertices not on the shortest underlying path between $r$ and $u$. $V_c$ is the set of all vertices that lie on a cycle and are not in $V_0$ or $V_1$. Additionally for any vertex $u$, $C(u)$ denotes the length of the shortest cycle containing $u$.

Strategy $S$ operates by first finding a vertex $u \in V_0 \cup V_1 \cup V_c$ that minimizes the function $f(u)$, defined below.

\[
f(u) =
    \begin{cases}
        \dist(r, u) + 1 & \text{if } u \in V_0 \cup V_1 \\
        \dist(r, u) + C(u) - 1 & \text{if } u \in V_c
    \end{cases}
\]
If $u \in V_0 \cup V_1$, then let $P$ be the shortest path from $r$ to $u$ on the underlying graph $G$. As $u$ minimizes $f$, this path will always be temporally admissible and have an arrival time equal to its length.

If the path was not temporally admissible, or did not have an arrival time equal to its length, then there would be a vertex $v$ on the path and closer to $r$ than $u$ which would not be temporally adjacent to the next vertex on the path. Thus $v \in V_0 \cup V_1$, and $f(v) < f(u)$.

The strategy is then to always defend the vertex adjacent to the fire that does not lie on $P$, up until turn $f(u)$. On turn $f(u)$ a non-burning neighbour of $u$ should be defended, prioritising a temporally adjacent neighbour if one exists. If there is a further non-burning neighbour of $u$, this should be defended on turn $f(u)+1$. Once the fire stops spreading, the burnt vertices will be all those on $P$, meaning that in total $f(u)$ vertices will be burnt.

Otherwise, if $u \in V_c$, then let $C$ be the shortest cycle containing $u$, and $P$ the shortest path from $r$ to $u$.

Note that as $f(u)$ is minimal, it must be the case that $P$ is either of length 0, or does not contain any edges of $C$. If it did, then there would be a vertex $v$ on $P$ and $C$ with a neighbour on $P$ but not on $C$. As $v$ lies on the shortest path between $r$ and $u$, it is necessarily closer to $r$ than $u$. Additionally, as $v$ lies on the same cycle as $u$, we would have that $C(v) \leq C(u)$, and thus $f(u) \leq f(u)$ which is impossible.

The strategy is then to always defend the vertex adjacent to the fire that does not lie on $P$, up until turn $\dist(r, u)+1$. On turn $\dist(r, u)+1$, one of the two non-burning vertices on $C$ adjacent to the fire should be defended. On each following turn, the vertex adjacent to the fire but not on $C$ should be defended. Once the fire stops spreading, the burnt vertices will be all those on $P$ and all those on $C$ except one, meaning once again $f(u)$ vertices are burnt in total.

We now show that there is an optimal strategy that always defends next to the fire. The argument here is equivalent to that for \textsc{Firefighter} \cite{DBLP:journals/dm/FinbowKMR07}, but uses comparison to our newly defined strategy $S$ to show optimality.

\begin{lemma}
\label{lem:adjfire}
Given a rooted temporal graph $((G, \lambda), r)$ of maximum degree $3$ and with a root of degree $2$, there is an optimum strategy that always defends next to the fire.
\end{lemma}
\begin{proof}
Assume there is some counterexample minimal in number of vertices, that is a graph $((G, \lambda), r)$ with no optimal strategy that always defends next to the fire. Let $x_1$ and $x_2$ be the two neighbours of $r$. If there is an optimal strategy in which the first vertex defended is a neighbour of $r$, say $x_1$ without loss of generality, then $((G - \{r, x_1\}, \lambda), x_2)$ is a smaller counterexample -- a contradiction.

Let $T$ be some optimal strategy for \textsc{Temporal Firefighter} on $((G, \lambda), r)$, and let $u$ be the closest vertex to $r$ defended in $T$. This cannot be a neighbour of $r$, and thus $\dist(r, u) \geq 2$.

If no neighbours of $u$ are burning once the fire stops spreading, then there are no temporally admissible paths between $r$ and $u$. In this case $T$ wastes a defence defending $u$, a vertex that will never burn, and is thus non-optimal, a contradiction.

If only one neighbour of $u$ is burning once the fire stops spreading, then defending this neighbour instead of $u$ saves one more vertex, and thus $T$ is once again non-optimal.

If once the fire stops spreading two neighbours of $u$ are burning, then $u$ lies on a cycle that is completely burnt except for $u$. In this case we argue that the strategy $S$ must save at least as many vertices. Strategy $S$ finds a vertex $v$ that minimizes $f(v)$, and always causes $f(v)$ vertices to burn, saving the rest. If $T$ performs better than $S$, then there is a vertex $w \in V_c$ lying on the same cycle as $u$ where the entire path between $r$ and $w$ has burnt, as well as the entire cycle except for $u$, thus meaning that $f(w) < f(v)$ vertices burn. This is impossible -- $f(v)$ is a minimum. As $S$ always defends next to the fire, its optimality contradicts the assumption.
\end{proof}

We now show that strategy $S$ is an optimal strategy, and thus that \textsc{Temporal Firefighter} is in P for temporal graphs of maximum degree $3$ with roots of degree $2$.

\begin{theorem}
\label{thm:threep}
\textsc{Temporal Firefighter} can be solved in polynomial time on a rooted temporal graph $((G, \lambda), r)$ of maximum degree $3$ with a root of degree at most $2$.
\end{theorem}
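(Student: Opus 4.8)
The plan is to combine Lemma~\ref{lem:adjfire} with a precise analysis of the strategies that defend next to the fire, showing that the strategy $S$ defined above is optimal among them and that its output can be computed in polynomial time. By Lemma~\ref{lem:adjfire} there is always an optimal strategy that defends next to the fire (the cases where $r$ has degree less than $2$ being trivial), so it suffices to (i) determine the number of vertices burnt by an arbitrary strategy of this form, (ii) show that $S$ minimizes this quantity, and (iii) show that the minimum can be evaluated in polynomial time.

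First I would establish the structural fact that, on a graph of maximum degree $3$ with a root of degree $2$, any strategy that always defends next to the fire confines the fire to a single path emanating from $r$. Initially only $r$ burns, and defending next to the fire on the first turn defends one of the two neighbours of $r$, leaving the fire to follow the other. Inductively, each time the fire burns a new vertex $w$, the edge by which the fire entered $w$ is already spent, and since $\deg(w) \le 3$ there are at most two forward neighbours; defending next to the fire removes one of them, so the burning region remains a path until the fire is contained. Consequently the outcome of such a strategy is entirely determined by the vertex $u$ at which the fire is finally stopped, and the set of burnt vertices is exactly the path travelled (together with the traversed cycle, minus one saved vertex, when containment happens on a cycle).

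Next I would show that the vertex $u$ at which containment occurs must lie in $V_0 \cup V_1 \cup V_c$, and that the number of vertices burnt when the fire is contained at $u$ is at least $f(u)$. Containment away from a cycle requires that the defences next to the fire leave no active forward edge open at the critical timestep, which (given that the arrival time of the single path equals its length) is exactly the condition defining $V_0 \cup V_1$ and yields $\dist(r,u)+1$ burnt vertices; containment on a cycle forces the whole traversed cycle except one vertex to burn, and using the shortest cycle $C(u)$ gives the lower bound $\dist(r,u)+C(u)-1$. Since $S$ selects a vertex $u^\ast$ minimizing $f$ and, as already argued, realizes a temporally admissible path of arrival time equal to its length to $u^\ast$, the strategy $S$ burns exactly $f(u^\ast)$ vertices. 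Hence every strategy defending next to the fire burns at least $f(u^\ast)$ vertices, and by Lemma~\ref{lem:adjfire} so does every optimal strategy; as $S$ attains this value and is itself a valid strategy defending next to the fire, $S$ is optimal and saves $|V(G)| - f(u^\ast)$ vertices.

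Finally, for the running time I would note that all the ingredients of $f$ are computable in polynomial time: the distances $\dist(r,u)$ by breadth-first search, the membership tests for $V_0$ and $V_1$ by examining, for each $u$, the edges incident to $u$ that are active at time $\dist(r,u)+1$, and the shortest-cycle lengths $C(u)$ (hence $V_c$) by standard shortest-cycle computations. Minimizing $f$ over all $u$ and reporting $|V(G)| - f(u^\ast)$ then takes polynomial time. I expect the main obstacle to be the lower-bound half of step~(ii): carefully verifying that containment at $u$ can never burn fewer than $f(u)$ vertices, in particular ruling out that a strategy defending next to the fire could stop the fire on a cycle while burning fewer than $\dist(r,u)+C(u)-1$ vertices (which relies on $C(u)$ being the shortest cycle through $u$), and confirming that the timing of the defences in $S$ --- including the special handling on turns $f(u^\ast)$ and $f(u^\ast)+1$ --- keeps every defence valid.
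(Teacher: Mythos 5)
Your proposal is correct and follows essentially the same route as the paper's proof: invoke Lemma~\ref{lem:adjfire} to restrict to strategies defending next to the fire, observe that such strategies confine the fire to a single path, case-split on whether containment happens at a vertex of $V_0 \cup V_1$ or on a cycle to lower-bound the burnt count by $f$ at some vertex, and conclude that $S$, which attains the minimum of $f$, is optimal and polynomial-time computable. Your flagged concern about the cycle case (that the bound relies on $C(u)$ being the shortest cycle) is handled in the paper by taking $v$ to be the first cycle vertex to burn and comparing against $f(v)$, which is exactly the careful version of the argument you anticipated.
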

\begin{proof}
First we note that if strategy $S$ is played on the graph $((G, \lambda), r)$ then $\min\{ f(u) \mid u \in V_0 \cup V_1 \cup V_c \}$ vertices will burn, and furthermore this value can be computed in polynomial time. We now show that strategy $S$ is optimal, and thus \textsc{Temporal Firefighter} is in P for temporal graphs of maximum degree $3$ with roots of degree $2$.

By Lemma \ref{lem:adjfire} there is an optimal strategy $T$ in which each vertex defended is next to the fire, thus restricting the fire to spreading down a single path. Let $w$ be the final vertex to burn, at the end of this path.

Due to the degree restriction there are at most two vertices adjacent to $w$ that do not lie on the path from $r$ down which the fire burnt to reach $w$. There are two ways in which the fire can stop spreading at $w$. In the first case both of these vertices are defended after the fire reaches $w$, and in the second at least one of these vertices has already been defended before the fire reaches $w$, and any undefended neighbours are defended afterwards.

In the first case, we must have that $w \in V_0 \cup V_1$, as there must have been time to make these defences after the fire reached $w$. Furthermore at least $\dist(r, w) + 1$ vertices must have burnt, and strategy $S$ performs at least as well, and is therefore optimal.

Otherwise, in the second case, $w$ must lie on a cycle that is fully burnt except for one vertex, as the already defended vertex must be adjacent to some burning vertex, and therefore adjacent to a vertex that lies on the burnt path from $r$ to $w$, as these are the only vertices that burn. Let $v$ be the first vertex on $C$ to have burnt. A path from $r$ to $v$ must be fully burnt, as is all of $C$ except for one vertex. Thus $f(v)$ vertices have burnt in total. As strategy $S$ finds a vertex $u$ that minimises $f(u)$, and allows $f(u)$ vertices to burn, it must be the case that $f(u) = f(v)$, and thus strategy $S$ is optimal. 
\end{proof}

\section{Restricting the Temporal Structure}
\label{sec:fpt}
As we have seen, our firefighters are going to have a hard time deciding on an optimal deployment strategy regardless of the layout of the archipelago. Our analysis of the complexity of \textsc{Temporal Firefighter} when restricting the underlying graph class shows that for most known graph classes $\mathscr{C}$ where \textsc{Firefighter} is polytime solvable, \textsc{Temporal Firefighter} is NP-Complete on the class of temporal graphs $\{(G, \lambda) : G \in \mathscr{C}\}$. This naturally leads us to consider whether the firefighters might be able to make use of some structure in the availability of the islanders, rather than the static layout of the islands. We now discuss the tractability of \textsc{Temporal Firefighter} when restricting the temporal structure of the graph.

We show that \textsc{Temporal Firefighter} is fixed parameter tractable when parameterised by vertex-interval-membership-width. Intuitively, bounding this parameter limits how active the graph can be on any given timestep.

Vertex-interval-membership-width, along with the vertex interval membership sequence, was defined by Bumpus and Meeks \cite{bumpus2021edge}. Begin by letting $\mintime(v)$ denote the minimum timestep upon which an incident edge of $v$ is active for all vertices $v$. Define $\maxtime$ equivalently for the maximum timestep.

\begin{definition}[Vertex Interval Membership Width]
The vertex interval membership sequence of a temporal graph $(G, \lambda)$ is the sequence $(F_t)_{t\in[\Lambda]}$ of vertex-subsets of $G$ where $F_t = \{v \in V(G) : \mintime(v) \leq t \leq \maxtime(v) \}$ and $\Lambda$ is the lifetime of $(G, \lambda)$.

The vertex-interval-membership-width of a temporal graph $(G, \lambda)$ is then the integer $\omega = \max_{t \in [\Lambda]}|F_t|$.
\end{definition}

Note that a vertex $v$ can only be in a sequence of consecutive members of the interval membership sequence. That is, it is impossible for there to be a vertex $v$ and times $s$, $t$ and $u$ such that $s < t < u$ where $v$ is in $F_s$ and $F_u$ but not in $F_t$. Furthermore, the vertex interval membership sequence of a graph can be computed in polynomial time \cite{bumpus2021edge}, and thus so can the vertex-interval-membership-width.

We find that \textsc{Temporal Firefighter} is FPT when parameterised by vertex-interval-membership-width. To simplify our analysis when showing this, we actually use the related problem \textsc{Temporal Firefighter Reserve}.

\textsc{Temporal Firefighter Reserve} is the temporal extension of the \textsc{Firefighter Reserve} problem described by Fomin et al. \cite{DBLP:journals/tcs/FominHL16}. In \textsc{Temporal Firefighter Reserve}, it is not required to make a defensive move every timestep. Rather, each timestep a budget is incremented by $1$, and it is then possible to defend any number of vertices less than or equal to the budget simultaneously, subtracting from the budget appropriately.

Just as in the static case, allowing the defence to build up a reserve in this manner does not affect the number of vertices than can be saved. In fact, the proof works identically to that for the static case as given by Fomin et al. \cite{DBLP:journals/tcs/FominHL16}.

\begin{lemma}
\label{lem:savereserve}
It is possible to save at least $k$ vertices in \textsc{Temporal Firefighter Reserve} on $((G, \lambda), r)$ if and only if it is possible to save at least $k$ vertices in \textsc{Temporal Firefighter Reserve} on $((G, \lambda), r)$.
\end{lemma}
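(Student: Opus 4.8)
The plan is to prove the two directions separately, mirroring the static argument of Fomin et al.\ but checking that the temporal spread rule supplies the monotonicity that the reduction needs. The forward direction, that saving $k$ vertices in \textsc{Temporal Firefighter} implies the same in \textsc{Temporal Firefighter Reserve}, is immediate: a strategy that defends exactly one vertex per timestep spends budget $1$ on each turn, which is always available in the reserve model, so replaying those defences saves the same vertices.

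For the reverse direction I would start from a reserve strategy saving $k$ vertices and show how to spread its defences out to one per timestep. For each defended vertex $v$ let $t_v$ be the timestep on which the reserve strategy defends it, and treat $t_v$ as a deadline. Since the budget increases by $1$ each timestep, the total number of defences made up to and including time $t$ is at most the accumulated budget $t$; hence $|\{v : t_v \le t\}| \le t$ for every $t$. This is exactly Hall's condition for scheduling the defences one-per-timestep so that each $v$ is placed in a slot $s_v \le t_v$, so a greedy assignment (process deadlines in increasing order, take the earliest free slot) yields a non-reserve strategy in which every reserve defence is made no later than before; any timesteps left unused are filled with arbitrary valid defences.

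It then remains to show this rescheduled strategy still saves everything the reserve strategy saved, and here I would prove a monotonicity lemma by induction on $t$: if one strategy has defended a superset of the vertices the other has defended by every time $t$, then its burning set is contained in the other's at every time $t$. The inductive step uses only that a vertex newly burns at time $t$ precisely when it is undefended and temporally adjacent at time $t$ to a currently burning vertex, so shrinking the burning set while enlarging the defended set can only inhibit further spread. Because the schedule above defends each vertex no later than the reserve strategy does, its defended-by-time-$t$ sets are supersets of the reserve strategy's, and the monotonicity lemma gives the corresponding containment of burning sets; hence every vertex saved in the reserve model remains saved.

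I expect the counting and the scheduling, which are identical to the static case, to be routine; the main obstacle is making the monotonicity argument airtight in the temporal setting and confirming that moving defences earlier never renders one invalid. This last point is handled by noting that any vertex the reserve strategy saves never burns there, so by the monotonicity lemma it never burns under the earlier schedule either, and is therefore a non-burning --- and hence valid --- target at the moment we defend it.
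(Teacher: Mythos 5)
Your proposal is correct and takes essentially the same approach as the paper: the forward direction is immediate since a one-defence-per-timestep strategy always fits within the reserve budget, and the reverse direction reschedules the reserve strategy's defences so that each occurs one per timestep, no later than its original time. The paper's own proof simply asserts that defences moved earlier remain valid and save the same vertices; your Hall-type scheduling argument and explicit monotonicity lemma supply exactly the details the paper leaves implicit, so this is a more careful writeup of the same argument rather than a different one.
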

\begin{proof}
Given a temporal graph $((G, \lambda), r)$, assume there is a strategy for \textsc{Temporal Firefighter Reserve} that saves $k$ vertices. Any strategy for \textsc{Temporal Firefighter} is a valid strategy for \textsc{Temporal Firefighter Reserve}, and thus it is also possible to save $k$ vertices in \textsc{Temporal Firefighter Reserve} by playing the same strategy.

Now assume that there is a strategy that saves $k$ vertices in \textsc{Temporal Firefighter Reserve} on $((G, \lambda), r)$. We can transform this strategy into a valid strategy for \textsc{Temporal Firefighter} that saves the same number of vertices as follows: if at any timestep $t$ the strategy defends $d > 1$ vertices, there must have been $d - 1$ timesteps at some point prior to this where no defences were made. By making exactly one of these $d$ defences on each of these $d-1$ prior timesteps, and timestep $t$ itself, we produce a valid strategy for \textsc{Temporal Firefighter}. Modifying the strategy in this manner creates a valid strategy, as if defending vertex $v$ is valid on timestep $t$, it must also be valid at any timestep less than $t$. Finally, as the exact same defences occur, only at an earlier time, the modified strategy must also save at least $k$ vertices.
\end{proof}

Additionally, we note that in \textsc{Temporal Firefighter Reserve} there is always an optimal strategy that only defends temporally adjacent to the fire, as any defence can be delayed until the turn upon which the defended vertex would burn. More generally, there is always an optimal strategy which defends only vertices at time $i$ if they have an incident edge active at time $i$. From now on when we refer to strategies for \textsc{Temporal Firefighter Reserve}, we assume that they all have this property.

We now give an algorithm for \textsc{Temporal Firefighter Reserve} that iterates over the vertex interval membership sequence of the input graph, and show that it is an FPT-algorithm with respect to vertex-interval-membership-width.

The algorithm takes as input a rooted temporal graph $((G, \lambda), r)$, and an integer $k$, and determines if it is possible to save $k$ vertices in temporal firefighter reserve played on the graph.
For any edge set $A$, let $V(A)$ be the set of vertices with an incident edge in $A$. The algorithm then operates by recursively computing a sequence of sets $L_i \in \mathcal{P}(F_i) \times \mathcal{P}(F_i) \times \{1, 2, ..., \Lambda\} \times \{1, 2, ..., n\}$ for each $F_i$ in the vertex interval membership sequence of the input graph.

An element of $L_i$ is a 4-tuple $(D, B, g, c)$ where $D$ is a set of defended vertices in $F_i$, $B$ is a set of burnt vertices in $F_i$, $g$ is the budget that will be available on timestep $i+1$, and $c$ is the total count of vertices that have burnt at time $i$. 

To determine the spread of the fire it is only necessary to keep track of the vertices that have burnt or been defended in $F_i$, as if a vertex is not in $F_i$ all its incident edges must either only be active before time $i$, or after time $i$. If the former is the case, then the fire cannot spread from or to it after time $i$, meaning that whether it is burning or defended does not affect the spread of the fire after this point. If the latter is the case then the vertex cannot be burning, as the fire cannot have reached it yet, and as we only defend vertices with incident edges active at time $i$, it cannot be defended either.

Additionally, it is possible to compute these defended and burning sets recursively from only a previous entry in the sequence, as a vertex $v$ can only be in a sequence of consecutive $F_i$s.
 
The problem can then be answered by checking if there is any entry $(D, B, g, c) \in L_\Lambda$ where $\Lambda$ is the lifetime of the graph, such that $|V(G)|-c \geq k$.

We recursively compute the sequence $L_i$, beginning by initialising $L_0 = (\emptyset, \{r\}, 1, 1)$. We let $E_i$ be the set of edges active at time $i$, and $N_i(S)$ the set of all vertices temporally adjacent at time $i$ to the vertices in $S$, for any set $S \subseteq V(G)$. Note that $V(E_i) \subseteq F_i$, as if a vertex has an incident edge active at time $i$, it is certainly in $F_i$.

We then require that, for any set $A \subseteq V(E_i) \setminus (B \cup D)$ containing vertices to be defended on timestep $i$, $(D', B', g-|A|+1, c') \in \mathcal{P}(F_i) \times \mathcal{P}(F_i) \times \{1, 2, ..., \Lambda\} \times \{1, 2, ..., n\}$ is in $L_i$ if and only if there is a tuple $(D, B, g, c)$ in $L_{i-1}$, such that:

\begin{enumerate}[(1)]
    \item $D' = (D \cap F_i) \cup A$
    \item $B' = (B \cap F_i) \cup N_i(B) \setminus D'$
    \item $g-|A|+1 > 0$
    \item $c' = c + |N_i(B) \setminus (B \cup D')|$
\end{enumerate}

That is, given sets of burning and defended vertices in $F_{i-1}$, we consider all the possible defences on vertices with incident edges active at time $i$, and create sets of burning and defended vertices in $F_i$ appropriately.
\par
Condition \textbf{1} ensures that the defended set contains only vertices with incident edges in $F_i$, and contains the set of new defences $A$.
\par
Condition \textbf{2} specifies that the burning set contains only vertices with incident edges in $F_i$, and that all non-defended vertices temporally adjacent to the fire burn.
\par
Condition \textbf{3} ensures that the budget is correct. The budget available on timestep $i+2$ will be $g-d+1$ if on timestep $i+1$ the budget is $g$, and $d$ vertices are to be defended, as the budget decreases by the number of defences made, but increases by 1 per timestep.
\par
Condition \textbf{4} counts the number of newly burnt vertices, ensuring that there only exists an entry with burnt vertex count $c$ if there is a corresponding strategy on which $c$ vertices burn by timestep $i$.
\par
We now show that computing these sets correctly answers the \textsc{Temporal Firefighter} problem. That is that entries exist in the sequence if and only if there is a corresponding strategy, and thus it is possible to check if $k$ vertices can be saved in temporal firefighter on the rooted temporal graph $((G, \lambda), r)$.
\begin{theorem}
Given a temporal graph $((G, \lambda), r)$ there is an entry $(D, B, g, c) \in L_i$ if and only if there exists a strategy for \textsc{Temporal Firefighter Reserve} on $((G, \lambda), r)$ such that $D$ and $B$ correspond to the vertices in $F_i$ that are defended and burnt respectively by timestep $i$, $g$ is the budget that will be available on timestep $i+1$, and $c$ is the total number of vertices that burn by timestep $i$.
\end{theorem}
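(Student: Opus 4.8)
The plan is to prove the biconditional by induction on the index $i$ of the vertex interval membership sequence, establishing both directions at each step. For the base case $i = 0$, I would simply check that the single initialised entry $(\emptyset, \{r\}, 1, 1) \in L_0$ matches the unique legal configuration before any defence is made: nothing is defended, only the root $r$ is burning, one unit of budget is available for timestep $1$, and exactly one vertex has burnt. For the inductive step I would assume the equivalence at index $i-1$ and argue each implication in turn.

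For the forward direction (entries yield strategies), suppose $(D', B', g', c') \in L_i$. The recurrence guarantees this entry arises from some $(D, B, g, c) \in L_{i-1}$ together with a defence set $A \subseteq V(E_i) \setminus (B \cup D)$ satisfying conditions \textbf{1}--\textbf{4}. By the induction hypothesis there is a strategy realising the state encoded by $(D, B, g, c)$ at timestep $i-1$. I would extend this strategy by spending budget to defend exactly the vertices of $A$ at timestep $i$, and then verify that, after the fire spreads, the defended set, burnt set, available budget, and cumulative count agree with $D', B', g', c'$: the defended and burnt sets follow from conditions \textbf{1} and \textbf{2}, the budget from condition \textbf{3} (which also certifies $A$ is affordable), and the count from condition \textbf{4}.

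For the backward direction (strategies yield entries), suppose a strategy realises a configuration at timestep $i$. Restricting attention to its behaviour up to timestep $i-1$, the induction hypothesis supplies a corresponding entry $(D, B, g, c) \in L_{i-1}$. Taking $A$ to be the set of vertices the strategy defends at timestep $i$, I would check that $A$ is a legal choice lying in $V(E_i) \setminus (B \cup D)$ and that feeding $(D, B, g, c)$ and $A$ through conditions \textbf{1}--\textbf{4} produces precisely the tuple encoding the timestep-$i$ configuration, so that this tuple lies in $L_i$.

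The main obstacle, and the step demanding the most care, is the \emph{locality} argument: that recording only the defended and burnt vertices lying in $F_i$ loses no information needed to reconstruct the fire's spread or the cumulative count. Concretely I must establish three facts, all consequences of the consecutive-membership property together with the definitions of $\mintime$ and $\maxtime$. First, every vertex acting as a source of fire at timestep $i$ already lies in $B$: such a source has an edge active at time $i$ (so $\maxtime \geq i$) yet burnt earlier (so $\mintime \leq i-1$), hence it lies in $F_{i-1}$ and is recorded. Second, any vertex dropped when passing from $F_{i-1}$ to $F_i$ has $\maxtime < i$ and so cannot influence the process after timestep $i-1$, justifying the intersection with $F_i$ in conditions \textbf{1} and \textbf{2}. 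Third, the newly burnt vertices counted by condition \textbf{4}, namely $N_i(B) \setminus (B \cup D')$, are never double-counted, since any vertex burnt strictly before time $i$ that is temporally adjacent to the fire at time $i$ must itself lie in $B$ and is thereby excluded. Once these three facts are in place, the recurrence is faithful to the full game despite retaining only a width-bounded slice of the state, and the induction closes.
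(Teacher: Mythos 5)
Your proposal is correct and follows essentially the same route as the paper's own proof: induction on $i$ with the same base case, the same forward direction (extend the strategy obtained from the induction hypothesis by defending $A$ and verify conditions \textbf{1}--\textbf{4}), and the same backward direction (restrict the strategy to timestep $i-1$, take $A$ to be the defences at time $i$, and check the conditions reproduce the tuple). Your three ``locality'' facts are exactly the justification the paper gives---more tersely---for why tracking only $F_i$-membership suffices, namely that any vertex spreading or blocking fire at time $i$ has an incident edge active at time $i$ and hence is recorded in the previous and current slices.
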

\begin{proof}
We proceed by induction on $i$. After timestep $0$, only one vertex (the root) has burnt, and $L_0 = \{(\emptyset, \{r\}, 1, 1)\}$. Now suppose that the result holds at timestep $i-1$.
\par
We now show that if $(D', B', g-d+1, c') \in L_i$ then there is a corresponding strategy. For $(D', B', g-d+1, c')$ to be in $L_i$ there must be an entry $(D, B, g, c) \in L_{i-1}$ and set of vertices $A \subseteq V(E_i) \setminus (B \cup D)$ with $d = |A|$ such that conditions \textbf{1} through \textbf{4} hold. By our induction hypothesis there is a corresponding strategy $S_{i-1}$ for this entry such that $D$ and $B$ are the vertices in $F_{i-1}$ that are defended and burnt respectively by timestep $i-1$, $g$ is the budget available at the end of timestep $i-1$, and $c$ is the total number of vertices burnt by timestep $i-1$. If we take $S_{i-1}$ and extend it by defending the set of vertices $A$ on timestep $i$, then we obtain a strategy $S_i$ that we claim corresponds to $(D', B', g-d+1, c')$.

First see that by the definition of $A$, all the defences it contains are valid, as $A$ contains only vertices in $V(E_i) \subseteq F_i$ that have have not either already burnt or been defended.

The vertices that are newly defended on timestep $i$ in $S_i$ are only those in $A$. Thus the vertices that are defended in $F_i$ on timestep $i$ in $S_i$ are those that were already defended and are also in $F_i$, that is $D \cap F_i$, and those that are newly defended. Thus by condition \textbf{1} the vertices in $F_i$ that are defended by timestep $i$ in $S_i$ are those in $D'$.

The vertices that then burn on timestep $i$ in $S_i$ are all those temporally adjacent to the fire and not defended. Additionally, any vertex from which the fire spreads on timestep $i$ must be in $F_i$, as it must have an incident edge active at time $i$. For the same reason, any defended vertex that the fire would otherwise burn on timestep $i$ must also be in $F_i$. Thus, $N_i(B) \setminus D'$ is the set of vertices that newly burn. Therefore the vertices that have burnt in $F_i$ on timestep $i$ in $S_i$ are those that had already burnt and are also in $F_i$, that is $B \cap F_i$, and those that newly burn: $N_i(B) \setminus D'$. Thus by condition \textbf{2} the vertices in $F_i$ that have burnt by timestep $i$ in $S_i$ are those in $B'$.

The budget available on timestep $i+1$ in $S_i$ is the budget available on timestep $i$ incremented by 1, with $|A|$, the number of defences made on timestep $i$, subtracted. Thus by condition \textbf{3} the budget available at timestep $i+1$ in $S_i$ is $g-d+1$.

The set of vertices that newly burn after timestep $i$ in $S_i$ is all those temporally adjacent to the fire and not defended or already burning, so the number of such vertices is $|N_i(B)\setminus(B \cup D')|$. The total number of vertices to have burnt after timestep $i$ in $S_i$ is then $c + |N_i(B)\setminus(B \cup D')|$, thus by condition \textbf{4} the total number of vertices to have burnt is $c'$.
\par
We now show the converse: that if there is a strategy $S$ such that after time $i$ the sets of vertices that have been defended and burnt are $D_S$ and $B_S$ respectively, $g'$ is the available budget, and $c'$ is the total number of vertices to have burnt, then there is a corresponding entry $(D', B', g', c') \in L_i$, such that $D' = D_S \cap F_i$ and $B' = B_S \cap F_i$.

Consider the state at timestep $i-1$ if strategy $S$ is played. By our induction hypothesis there is a corresponding entry $(D, B, g, c) \in L_{i-1}$ where $D$ is the set of vertices in $F_{i-1}$ that are defended at time $i-1$, $B$ is the set of vertices in $F_{i-1}$ that are burnt at time $i-1$, $g$ is the budget that will be available at time $i$, and $c$ is the total number of vertices to have burnt at time $i-1$.

Let $A$ be the vertices defended at time $i$ in strategy $S$. As we consider only strategies that only defend vertices at time $i$ with incident edges at time $i$, and $A$ is a valid defence, we have that $A \subseteq V(E_i) \setminus (B \cup D)$.

By our induction hypothesis $D$ is the set of vertices in $F_{i-1}$ that are defended by time $i-1$, so the set of vertices in $F_i$ defended by time $i$ is $D_S \cap F_i = (D \cap F_i) \cup A$.

Again by the induction hypothesis $B$ is the set of vertices in $F_{i-1}$ that are burnt by time $i-1$. The only vertices from which the fire can spread on timestep $i$ are those that have an incident edge active at time $i$, and thus are in $F_i$. For the same reason the only defended vertices that would otherwise burn on timestep $i$ are in $F_i$. Therefore the vertices in $F_i$ burnt by time $i$ are $B_S \cap F_i = (B \cap F_i) \cup N_i(B) \setminus D_S = (B \cap F_i) \cup N_i(B) \setminus (D \cup A)$.

Finally, the budget available at time $i$ is $g$, and so the budget at time $i+1$ is $g' = g - |A| + 1$, and the number of vertices burnt after time $i-1$ is $c$, so the number of vertices burnt after time $i$ is $c' = c + |N_i(B) \setminus (B \cup D_S)| = c + |N_i(B) \setminus (B \cup (D \cap F_i) \cup A)|$.

Thus we see that, given $(D, B, g, c)$ is an entry in $L_{i-1}$, we have that $(D_S \cap F_i, B_S \cap F_i, g', c')$ satisfies conditions \textbf{1}-\textbf{4}, and thus is an entry in $L_i$.
\end{proof}

We now determine the runtime of computing all sets $L_i$, thus showing that \textsc{Temporal Firefighter} is FPT when parameterised by vertex-interval-membership-width.

\begin{theorem}
It is possible to solve \textsc{Temporal Firefighter} in time $O(8^\omega\omega\Lambda^3)$ for a rooted temporal graph $((G, \lambda), r)$ where $\Lambda$ is the lifetime of the graph, and $\omega$ is the vertex-interval-membership-width.
\end{theorem}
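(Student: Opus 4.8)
The plan is to charge the whole running time to the cost of building each set $L_i$ from its predecessor $L_{i-1}$ and then multiply by the number $\Lambda$ of layers in the vertex interval membership sequence. The first step is to bound the number of entries that can occur in a single $L_i$. Each entry is a tuple $(D,B,g,c)$ with $D,B \subseteq F_i$, $g \in \{1,\dots,\Lambda\}$, and $c \in \{1,\dots,n\}$; since $|F_i| \le \omega$ by the definition of the width there are at most $2^\omega$ choices for each of $D$ and $B$, so $|L_i| \le 4^\omega \Lambda n$. The second step removes the stray factor of $n$: every vertex of the connected graph has an incident edge and hence lies in $F_t$ for at least one $t$, so $n \le \sum_{t=1}^{\Lambda}|F_t| \le \omega\Lambda$. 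Substituting gives $|L_i| \le 4^\omega\omega\Lambda^2$.

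The third step bounds the recurrence itself. To produce $L_i$ I would iterate over every entry $(D,B,g,c)\in L_{i-1}$ and, for each, over every candidate defence set $A \subseteq V(E_i)\setminus(B\cup D)$; because $V(E_i)\subseteq F_i$ there are at most $2^\omega$ such sets $A$. Hence the total number of (entry, $A$) pairs examined across the whole computation is at most $\sum_{i=1}^{\Lambda}|L_{i-1}|\cdot 2^\omega \le \Lambda\cdot 4^\omega\omega\Lambda^2\cdot 2^\omega = 8^\omega\omega\Lambda^3$. For each pair I would form the new tuple $(D',B',g-|A|+1,c')$ directly from conditions 1--4 and insert it into $L_i$; the final scan of $L_\Lambda$ for an entry with $|V(G)|-c\ge k$ is dominated by this count. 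If each pair can be processed in constant time, the claimed $O(8^\omega\omega\Lambda^3)$ bound follows immediately.

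The delicate part, and where I expect to spend the most care, is precisely that constant-time-per-pair claim, since a naive evaluation of conditions 1--4 costs another factor of $\omega$ and degrades the width-dependence to $\omega^2$. The operations required are intersections with $F_i$, a union with $A$, set differences, and the temporal neighbourhood $N_i(B)$. I would handle this by representing $D$, $B$, and $A$ as bit-vectors indexed by the at most $\omega$ vertices of $F_i$, and by precomputing, once per layer $i$ from the edges of $E_i$, the map that sends a burning set to $N_i(\cdot)$ restricted to $F_i$. Each update of conditions 1--4 then reduces to a fixed number of bit-parallel (word-level) set operations, so the transition is $O(1)$ per pair. The remaining routine obligations are to confirm that the membership sets $F_i$ and the maps $N_i$ are themselves computable within the same overall budget, which follows from the polynomial-time computability of the vertex interval membership sequence noted earlier, and to observe that deduplicating tuples on insertion does not increase the asymptotic cost since the number of pairs processed already dominates $|L_i|$.
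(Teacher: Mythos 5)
Your proposal is correct and follows essentially the same argument as the paper: bound $|L_i| = O(4^\omega\omega\Lambda^2)$ (the paper obtains the $O(\omega\Lambda)$ range for the burn count $c$ by bounding the burnt vertices by $\sum_j |V(E_j)| \le 2\omega\Lambda$, while you bound $n$ itself by noting every vertex lies in some $F_t$ --- the same count in different clothing), then multiply by the at most $2^\omega$ candidate defence sets per entry and by the $\Lambda$ layers to get $O(8^\omega\omega\Lambda^3)$. Your attention to the cost of a single transition (conditions \textbf{1}--\textbf{4} via bit-vector operations) addresses an implementation detail that the paper passes over silently; it is a tightening of the same route rather than a different one.
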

\begin{proof}
\textsc{Temporal Firefighter Reserve}, and therefore \textsc{Temporal Firefighter} can be answered by computing all sets $L_i$. Thus it suffices to show that each of these sets can be computed in the required time.

To compute $L_i$ every possible defence on a set of vertices with incident edges active at time $i$ must be considered for every entry in $L_{i-1}$.

First observe that the total number of burnt vertices on any given timestep $i$ is at most $\Sigma_{j=1}^i|V(E_j)| = O(\omega\Lambda)$, as on each timestep $j$ only vertices in $V(E_j)$ can burn, and $|V(E_j)| \leq 2|E_j| \leq 2\omega$, and for any timestep $i$ we have that $i \leq \Lambda$.

Now see that for any $i$, we have that $|L_i| = O(4^\omega\omega\Lambda^2)$ as $L_i \subseteq \mathcal{P}(F_i) \times \mathcal{P}(F_i) \times \{1, ..., \Lambda\} \times \{1, ..., \omega\Lambda\}$, and $|\mathcal{P}(F_i)| \times |\mathcal{P}(F_i)| = 2^{2\omega} = 4^\omega$.

Furthermore, on each timestep $i$ we only consider defending vertices in $V(E_i)$, and $|V(E_i)| \leq \omega$. Thus for each timestep there are at most $2^\omega$ defences to consider.

As described, for each timestep $i$ in the lifetime $\Lambda$ of the graph, it is necessary to compute every possible set of defences for every entry in $L_i$.  The overall complexity is therefore $O(4^\omega\omega\Lambda^2 \times 2^\omega \times \Lambda) = O(8^\omega\omega\Lambda^3)$ as required.
\end{proof}

\section{Conclusion}
In this paper we introduced the \textsc{Temporal Firefighter} problem, an extension of \textsc{Firefighter} to temporal graphs. We found that this problem is, like \textsc{Firefighter}, NP-Complete on arbitrary graphs, and in particular is NP-Complete on any underlying graph class for which \textsc{Firefighter} is NP-Complete. In order to try and identify places where \textsc{Temporal Firefighter} is tractable, we began by determining its complexity on several underlying graph classes for which it is known \textsc{Firefighter} can be solved in polynomial time.

Despite finding that \textsc{Temporal Firefighter} is NP-Complete on all but one underlying graph class that we considered, we were able to find a promising avenue for tractability in restricting the temporal structure of the graph, and found that the problem is FPT when parameterised by the temporal graph parameter vertex-interval-membership-width.

An interesting direction for future work on the problem would be to further investigate such temporal parameters. A natural first goal would be to determine the complexity of \textsc{Temporal Firefighter} when the maximum number of edges active each timestep is bounded. Furthermore, it would be worthwhile to determine the complexity of \textsc{Temporal Firefighter} when parameterised by interval membership width, a relative of vertex-interval-membership-width also introduced by Bumpus and Meeks \cite{bumpus2021edge}. This measure can be arbitrarily larger than the vertex-interval-membership-width.

\bibliography{bibliography}

\begin{thebibliography}{10}

\bibitem{DBLP:journals/algorithmica/AnshelevichCHS12}
Elliot Anshelevich, Deeparnab Chakrabarty, Ameya Hate, and Chaitanya Swamy.
\newblock Approximability of the firefighter problem - computing cuts over
  time.
\newblock {\em Algorithmica}, 62(1-2):520--536, 2012.
\newblock \href {https://doi.org/10.1007/s00453-010-9469-y}
  {\path{doi:10.1007/s00453-010-9469-y}}.

\bibitem{DBLP:journals/corr/AxiotisF16}
Kyriakos Axiotis and Dimitris Fotakis.
\newblock On the size and the approximability of minimum temporally connected
  subgraphs.
\newblock {\em CoRR}, abs/1602.06411, 2016.
\newblock URL: \url{http://arxiv.org/abs/1602.06411}, \href
  {http://arxiv.org/abs/1602.06411} {\path{arXiv:1602.06411}}.

\bibitem{DBLP:journals/jcss/BazganCCFFL14}
Cristina Bazgan, Morgan Chopin, Marek Cygan, Michael~R. Fellows, Fedor~V.
  Fomin, and Erik~Jan van Leeuwen.
\newblock Parameterized complexity of firefighting.
\newblock {\em J. Comput. Syst. Sci.}, 80(7):1285--1297, 2014.
\newblock \href {https://doi.org/10.1016/j.jcss.2014.03.001}
  {\path{doi:10.1016/j.jcss.2014.03.001}}.

\bibitem{DBLP:conf/adhoc-now/BhadraF03}
Sandeep Bhadra and Afonso Ferreira.
\newblock Complexity of connected components in evolving graphs and the
  computation of multicast trees in dynamic networks.
\newblock In Samuel Pierre, Michel Barbeau, and Evangelos Kranakis, editors,
  {\em Ad-Hoc, Mobile, and Wireless Networks, Second International Conference,
  {ADHOC-NOW} 2003 Montreal, Canada, October 8-10, 2003, Proceedings}, volume
  2865 of {\em Lecture Notes in Computer Science}, pages 259--270. Springer,
  2003.
\newblock \href {https://doi.org/10.1007/978-3-540-39611-6\_23}
  {\path{doi:10.1007/978-3-540-39611-6\_23}}.

\bibitem{DBLP:journals/ijfcs/XuanFJ03}
Binh{-}Minh Bui{-}Xuan, Afonso Ferreira, and Aubin Jarry.
\newblock Computing shortest, fastest, and foremost journeys in dynamic
  networks.
\newblock {\em Int. J. Found. Comput. Sci.}, 14(2):267--285, 2003.
\newblock \href {https://doi.org/10.1142/S0129054103001728}
  {\path{doi:10.1142/S0129054103001728}}.

\bibitem{bumpus2021edge}
Benjamin~Merlin Bumpus and Kitty Meeks.
\newblock Edge exploration of temporal graphs.
\newblock {\em CoRR}, abs/2103.05387, 2021.
\newblock \href {http://arxiv.org/abs/2103.05387} {\path{arXiv:2103.05387}}.

\bibitem{DBLP:conf/isaac/CaiVY08}
Leizhen Cai, Elad Verbin, and Lin Yang.
\newblock Firefighting on trees: (1-1/e)-approximation, fixed parameter
  tractability and a subexponential algorithm.
\newblock In Seok{-}Hee Hong, Hiroshi Nagamochi, and Takuro Fukunaga, editors,
  {\em Algorithms and Computation, 19th International Symposium, {ISAAC} 2008,
  Gold Coast, Australia, December 15-17, 2008. Proceedings}, volume 5369 of
  {\em Lecture Notes in Computer Science}, pages 258--269. Springer, 2008.
\newblock \href {https://doi.org/10.1007/978-3-540-92182-0\_25}
  {\path{doi:10.1007/978-3-540-92182-0\_25}}.

\bibitem{DBLP:journals/tcs/ChlebikovaC17}
Janka Chleb{\'{\i}}kov{\'{a}} and Morgan Chopin.
\newblock The firefighter problem: Further steps in understanding its
  complexity.
\newblock {\em Theor. Comput. Sci.}, 676:42--51, 2017.
\newblock \href {https://doi.org/10.1016/j.tcs.2017.03.004}
  {\path{doi:10.1016/j.tcs.2017.03.004}}.

\bibitem{DBLP:journals/dm/FinbowKMR07}
Stephen Finbow, Andrew~D. King, Gary MacGillivray, and Romeo Rizzi.
\newblock The firefighter problem for graphs of maximum degree three.
\newblock {\em Discret. Math.}, 307(16):2094--2105, 2007.
\newblock \href {https://doi.org/10.1016/j.disc.2005.12.053}
  {\path{doi:10.1016/j.disc.2005.12.053}}.

\bibitem{DBLP:journals/ajc/FinbowM09}
Stephen Finbow and Gary MacGillivray.
\newblock The firefighter problem: a survey of results, directions and
  questions.
\newblock {\em Australas. {J} Comb.}, 43:57--78, 2009.
\newblock URL: \url{http://ajc.maths.uq.edu.au/pdf/43/ajc\_v43\_p057.pdf}.

\bibitem{DBLP:conf/fun/FominHL12}
Fedor~V. Fomin, Pinar Heggernes, and Erik~Jan van Leeuwen.
\newblock Making life easier for firefighters.
\newblock In Evangelos Kranakis, Danny Krizanc, and Flaminia~L. Luccio,
  editors, {\em Fun with Algorithms - 6th International Conference, {FUN} 2012,
  Venice, Italy, June 4-6, 2012. Proceedings}, volume 7288 of {\em Lecture
  Notes in Computer Science}, pages 177--188. Springer, 2012.
\newblock \href {https://doi.org/10.1007/978-3-642-30347-0\_19}
  {\path{doi:10.1007/978-3-642-30347-0\_19}}.

\bibitem{DBLP:journals/tcs/FominHL16}
Fedor~V. Fomin, Pinar Heggernes, and Erik~Jan van Leeuwen.
\newblock The firefighter problem on graph classes.
\newblock {\em Theor. Comput. Sci.}, 613:38--50, 2016.
\newblock \href {https://doi.org/10.1016/j.tcs.2015.11.024}
  {\path{doi:10.1016/j.tcs.2015.11.024}}.

\bibitem{hartnell1995firefighter}
Bert Hartnell.
\newblock Firefighter! an application of domination.
\newblock In {\em the 24th Manitoba Conference on Combinatorial Mathematics and
  Computing, University of Minitoba, Winnipeg, Cadada, 1995}, 1995.

\bibitem{DBLP:journals/snam/HimmelMNS17}
Anne{-}Sophie Himmel, Hendrik Molter, Rolf Niedermeier, and Manuel Sorge.
\newblock Adapting the bron-kerbosch algorithm for enumerating maximal cliques
  in temporal graphs.
\newblock {\em Soc. Netw. Anal. Min.}, 7(1):35:1--35:16, 2017.
\newblock \href {https://doi.org/10.1007/s13278-017-0455-0}
  {\path{doi:10.1007/s13278-017-0455-0}}.

\bibitem{DBLP:journals/corr/abs-1108-1780}
Petter Holme and Jari Saram{\"{a}}ki.
\newblock Temporal networks.
\newblock {\em CoRR}, abs/1108.1780, 2011.
\newblock URL: \url{http://arxiv.org/abs/1108.1780}, \href
  {http://arxiv.org/abs/1108.1780} {\path{arXiv:1108.1780}}.

\bibitem{DBLP:journals/jcss/KempeKK02}
David Kempe, Jon~M. Kleinberg, and Amit Kumar.
\newblock Connectivity and inference problems for temporal networks.
\newblock {\em J. Comput. Syst. Sci.}, 64(4):820--842, 2002.
\newblock \href {https://doi.org/10.1006/jcss.2002.1829}
  {\path{doi:10.1006/jcss.2002.1829}}.

\bibitem{DBLP:journals/im/Michail16}
Othon Michail.
\newblock An introduction to temporal graphs: An algorithmic perspective.
\newblock {\em Internet Math.}, 12(4):239--280, 2016.
\newblock \href {https://doi.org/10.1080/15427951.2016.1177801}
  {\path{doi:10.1080/15427951.2016.1177801}}.

\bibitem{DBLP:journals/tkde/WuCKHHW16}
Huanhuan Wu, James Cheng, Yiping Ke, Silu Huang, Yuzhen Huang, and Hejun Wu.
\newblock Efficient algorithms for temporal path computation.
\newblock {\em {IEEE} Trans. Knowl. Data Eng.}, 28(11):2927--2942, 2016.
\newblock \href {https://doi.org/10.1109/TKDE.2016.2594065}
  {\path{doi:10.1109/TKDE.2016.2594065}}.

\end{thebibliography}
\end{document}